\newtheorem{theorem}{Theorem}  
\newtheorem{definition}{Definition}  
\newtheorem{corollary}{Corollary}  
\newtheorem{assumption}{Assumption}
\newcommand{\hide}[1]{} 
\newcommand{\vpara}[1]{\vspace{0.1in}\noindent\textbf{#1 }}
\newcommand{\beq}[1]{\begin{equation}#1\end{equation}
	\normalsize
}
\newcommand{\model}{{{G}{raph}{\large\emph{S}}{\footnotesize GAN}}}
\newcommand{\smodel}{\model\space}
\begin{document}
\title{
Semi-supervised Learning on Graphs with Generative Adversarial Nets
}


 \author{Ming Ding}
 \affiliation{%
   \institution{Tsinghua University}
   \city{Beijing, China}
 }
 \email{dm18@mails.tsinghua.edu.cn}

 \author{Jie Tang}
 \affiliation{%
   \institution{Tsinghua University}
   \city{Beijing, China}
 }
 \email{jietang@tsinghua.edu.cn}

 \author{Jie Zhang}
 \affiliation{%
   \institution{Tsinghua University}
   \city{Beijing, China}
 }
 \email{j-z16@mails.tsinghua.edu.cn}
\renewcommand{\shortauthors}{}
\begin{abstract}
	We investigate how generative adversarial nets (GANs) can help semi-supervised learning on graphs.
	We first provide insights on working principles of adversarial learning over graphs and then
	present \model, a novel approach to semi-supervised learning on graphs.
    In \model, generator and classifier networks play a novel competitive game. At equilibrium, generator generates fake samples in low-density areas between subgraphs. In order to discriminate fake samples from the real, classifier implicitly takes the density property of subgraph into consideration. An efficient adversarial learning algorithm has been developed to improve traditional normalized graph Laplacian regularization with a theoretical guarantee.
    
	Experimental results on several different genres of datasets show that the proposed \smodel significantly outperforms several state-of-the-art methods.
	\smodel can be also trained using mini-batch, thus enjoys the scalability advantage.

\end{abstract}

%
%
\begin{CCSXML}
<ccs2012>
<concept>
<concept_id>10002951.10003227.10003351</concept_id>
<concept_desc>Information systems~Data mining</concept_desc>
<concept_significance>500</concept_significance>
</concept>
<concept>
<concept_id>10010147.10010257.10010282.10011305</concept_id>
<concept_desc>Computing methodologies~Semi-supervised learning settings</concept_desc>
<concept_significance>300</concept_significance>
</concept>
</ccs2012>
\end{CCSXML}

\ccsdesc[500]{Information systems~Data mining}
\ccsdesc[300]{Computing methodologies~Semi-supervised learning settings}

\keywords{graph learning; semi-supervised learning; generative adversarial networks}

\maketitle
\section{Introduction}\label{sec:intro}

\hide{
Many learning-based applications on graphs suffer from statistically insufficient labeled data. 
For instance, in a large citation graph, it would be difficult to have a large number of labeled papers on ``SVM applications''.
Semi-supervised learning thus has been proposed to improve the learning accuracy by leveraging the large unlabeled data for training.
}

Semi-supervised learning on graphs has attracted great attention  both in theory and practice. 
Its basic setting is that we
are given a graph comprised of a small set of labeled nodes and a large set of unlabeled nodes, and the goal is to learn a model that can predict label of the unlabeled nodes.

There is a long line of works about semi-supervised learning over graphs. 
One important category of the research is mainly based on the graph Laplacian regularization framework. 
For example, 
Zhu et al.~\shortcite{zhu2002learning} proposed a method called Label Propagation for learning from labeled and unlabeled data on graphs, and later the method has been improved by Lu and Getoor~\shortcite{lu2003link} under the bootstrap-iteration framework.
Blum and Chawla~\shortcite{blum2001learning} also formulated the graph learning problem  as that of finding min-cut on graphs. 
Zhu et al.~\shortcite{zhu2003semi} proposed an algorithm based on Gaussian random field and formalized graph Laplacian regularization framework. 
Belkin et al.~\shortcite{belkin2006manifold} presented a 
regularization method called ManiReg by exploiting geometry of marginal distribution for semi-supervised learning. 
The second category of the research is to combine semi-supervised learning with graph embedding.
Weston et al.~\shortcite{weston2012deep} first incorporated deep neural networks into the graph Laplacian regularization framework for semi-supervised learning and embedding. Yang et al.~\shortcite{yang2016revisiting} proposed the Planetoid model for jointly learning graph embedding and predicting node labels. Recently, Defferrard et al.~\shortcite{defferrard2016convolutional} utilized localized spectral Chebyshev filters to perform convolution on graphs for machine learning tasks. Graph convolution networks (GCN) \cite{kipf2016semi} and its extension based on attention techniques~\cite{DBLP:journals/corr/abs-1710-10903} demonstrated great power and achieved state-of-art performance on this problem.  

This paper investigates the potential of generative adversarial nets (GANs) for  semi-supervised learning over graphs.
GANs~\cite{goodfellow2014generative} are originally designed for generating images, by training two neural networks which play a min-max game: discriminator $D$ tries to discriminate real from fake samples and generator $G$ tries to generate ``real'' samples to fool the discriminator. 
To the best of our knowledge, there are few works on semi-supervised learning over graphs with GANs.

We present a novel method \smodel for semi-supervised learning on graphs with GANs.
\smodel
maps graph topologies into feature space and jointly trains generator network and classifier network. 
Previous works~\cite{dai2017good,kumar2017semi} tried to explain semi-supervised GANs' working principles, but only found that generating {\it moderate fake samples} in complementary areas benefited classification and analyzed under strong assumptions. 
This paper explains the working principles behind the proposed model from the perspective of game theory.
We have an intriguing observation that 
fake samples in low-density areas between subgraphs
can reduce the influence of samples nearby, thus help improve the classification accuracy. 
A novel GAN-like game is designed under the guidance of this observation. Sophisticated losses guarantee the generator generates samples in these low-density areas at equilibrium. 
In addition, integrating with the observation, the
graph Laplacian regularization framework (Equation (\ref{basic})) can leverage clustering property to make stable progress.
It can be theoretically proved that this adversarial learning technique yields perfect classification for semi-supervised learning on graphs with plentiful but finite generated samples. 

The proposed \smodel is evaluated on several different genres of datasets. Experimental results show that \smodel significantly outperforms several state-of-the-art methods. \smodel can be also trained using mini-batch, thus enjoys the scalability advantage. 

Our contributions are as follows:
\begin{itemize}
\item We introduce GANs as a tool to solve classification tasks on graphs under semi-supervised settings. \smodel generates fake samples in low-density areas in graph and leverages clustering property to help classification. 

\item We formulate a novel competitive game between generator and discriminator for \smodel and thoroughly analyze the dynamics, equilibrium and working principles during training. In addition, we generalize the working principles to improve traditional algorithms. Our theoretical proof and experimental verification both outline the effectiveness of this method.  

\item We evaluate our model on several dataset with different scales. \smodel significantly outperforms previous works and demonstrates outstanding scalability.
\end{itemize}

The rest of the paper is arranged as follows. In Section~\ref{sec:def}, we introduce the necessary definitions and GANs. In Section~\ref{sec:gen_fs}, we present \smodel and discuss why and how the model is designed in detail. A theoretical analysis of the working principles behind \model is given in Section ~\ref{sec:theory}. We outline our experiments in Section~\ref{sec:experiments} and show the superiority of our model. We close with a summary of related work in Section~\ref{sec:related}, and our conclusions.

\section{Preliminaries}
\label{sec:def}
\subsection{Problem Definition}
    Let $G=(V,E)$ denote a graph, where $V$ is a set of nodes and $E \subseteq V \times V $ is a set of edges. Assume each node $v_i$ is associated with a $k-$dimensional real-valued feature vector $\mathbf{w}_i \in \mathbb{R}^k$ and a label $y_i \in \{0,...,M-1\}$.
    If the label $y_i$ of node $v_i$ is unknown, we say node $v_i$ is an unlabeled node. We denote the set of labeled nodes as $V^L$ and the set of unlabeled nodes as $V^U=V\backslash V^L$. Usually, we have $|V^L| \ll |V^U|$. We also call the graph $G$ as \textit{partially labeled graph}~\cite{Tang:11PKDD}.
    Given this, we can formally define the semi-supervised learning problem on graph.

\begin{definition}\textbf{Semi-supervised Learning on Graph.}
	Given a partially labeled graph $G=(V^L\cup V^U, E)$, the objective here is to learn a function $f$ using features $\mathbf{w}$ associated with each node and the graphical structure, in order to predict the labels of unlabeled nodes in the graph.
\end{definition}
 
 Please note that in semi-supervised learning, training and prediction are usually performed simultaneously. In this case, the learning considers both labeled nodes and unlabeled nodes, as well as the structure of the whole graph.
    In this paper, we mainly consider transductive learning setting, though the proposed model can be also applied to other machine learning settings. 
    Moreover, we only consider undirected graphs, but the extension to directed graphs is straightforward.

\subsection{Generative Adversarial Nets (GANs)}

GAN~\cite{goodfellow2014generative} is a new framework for estimating generative models via an adversarial process, in which a generative model $G$ is trained to best fit the original training data and a discriminative model $D$ is trained to distinguish real samples from samples generated by model $G$.
The process can be formalized as a min-max game between $G$ and $D$, with the following loss (value) function:

\beq{\label{gan_aim}
\min\limits_G\max\limits_D V(G,D) = \mathbb{E}_{\mathbf{x}\sim p_{d}(\mathbf{x})} \log D(\mathbf{x}) 
+ \mathbb{E}_{\mathbf{z}\sim p_z(\mathbf{z})} \log [1-D(G(\mathbf{z}))]
}

\noindent where $p_{d}$ is the data distribution from the training data, $p_z(\mathbf{z})$ is a prior on input noise variables.

\begin{figure}[t]
	\centering
	\includegraphics[width=\linewidth]{./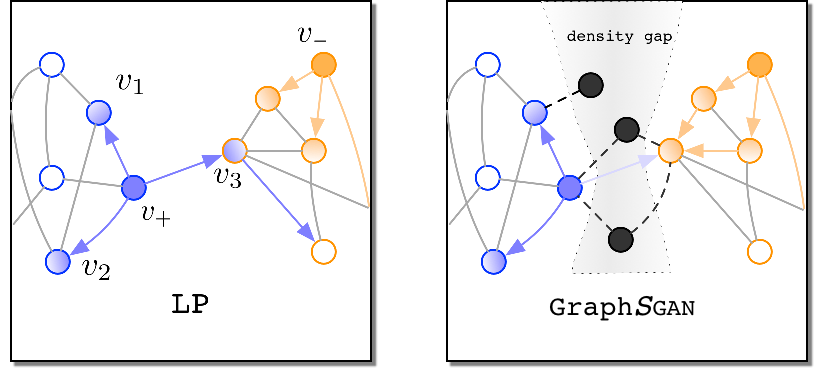}
	\caption{A qualitative illustration of  working principles in \model. 
		The two labeled nodes are in solid blue and solid orange respectively and all the other nodes are unlabeled nodes.	
		The left figure is a bad case of vanilla Label Propagation algorithm. In this case, node $v_3$ is assigned a wrong label due to its direct link to  node $v_+$. 
		The right figure illustrates how \smodel works. It generates fake nodes (in black) in the density gap thus reduces the influence of nodes across the density gap.
}\label{example}
\end{figure}

 \begin{figure*}
 \centering
 \includegraphics[width=0.75\linewidth]{./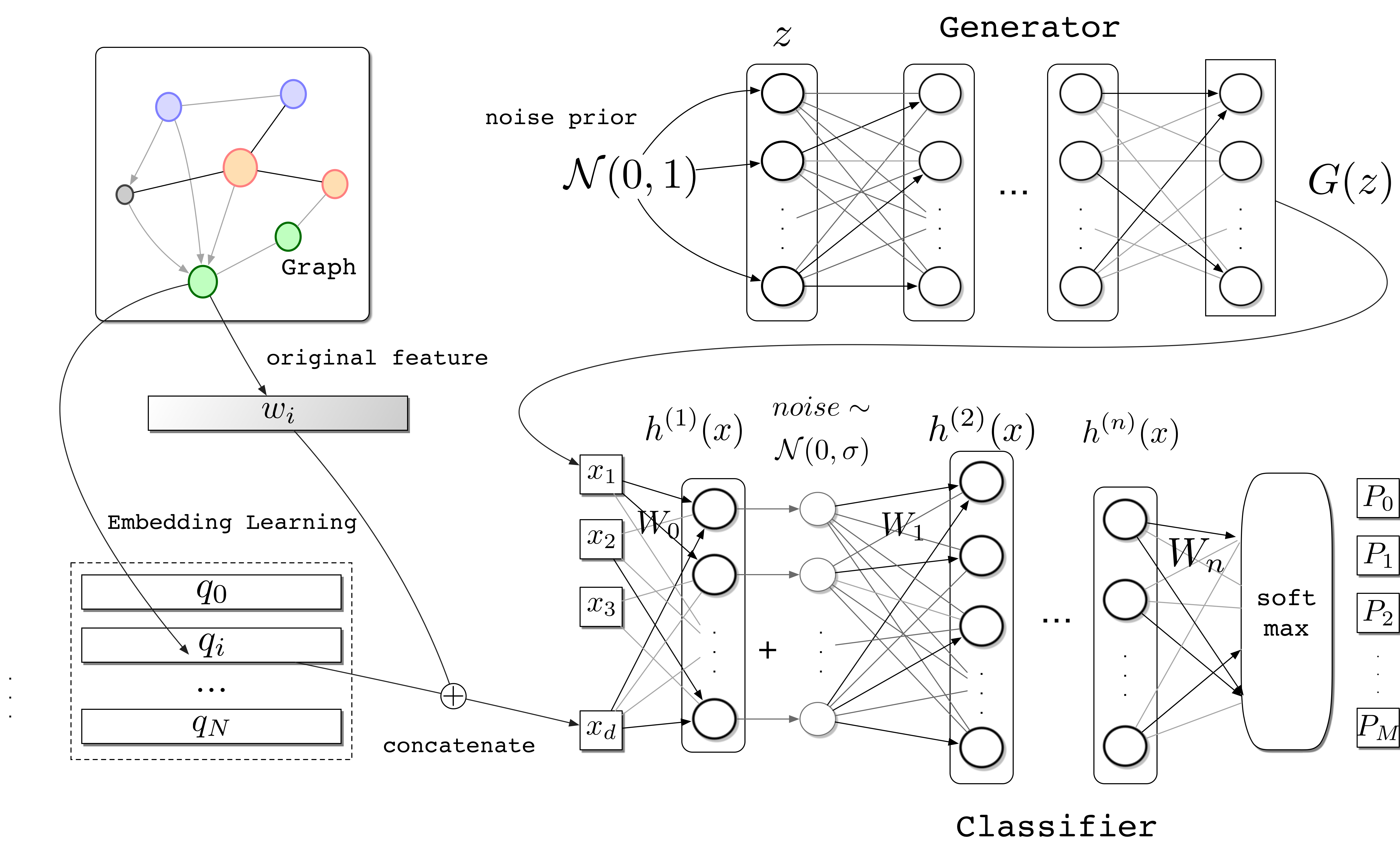}
 \caption{An overview of our model. Fake inputs are generated by generator and real inputs are acquired by concatenating original feature $\mathbf{w}_i$ and learned embedding $\mathbf{q}_i$. Both real inputs and fake samples generated by generator are fed into the classifier.}\label{overview}
 \end{figure*}
 
\section{Model Framework}
\label{sec:gen_fs}

\subsection{Motivation}
\label{sec:mot}
We now introduce how we leverage the power of GANs for semi-supervised learning over graphs. Directly applying GAN to graph learning is infeasible, as it does not consider the graph structure. To show how GANs help semi-supervised learning over graphs, we begin with one example. The left figure in Figure~\ref{example} shows a typical example in graph-based semi-supervised learning. The two labeled nodes are in blue and orange respectively. Traditional methods such as Label Propagation~\cite{zhu2002learning} does not consider the graph topology, thus cannot differentiate the propagations from node $v_+$ to nodes $v_1$, $v_2$, and $v_3$. 
Taking a closer look at the graph structure, we can see there are two subgraphs. We call the area between the two subgraphs as \textbf{density gap}.

Our idea is to use GAN to estimate the density subgraphs and then generate samples in the density gap area. We then request the classifier firstly to discriminate fake samples before classifying them into different classes.
In this way, discriminating \textit{fake} samples from \textit{real} samples will result in a higher curvature of the learned classification function around density gaps, which 
weakens the effect of propagation across density gaps (as shown in the right figure of Figure~\ref{example}).
Meanwhile inside each subgraph, confidence on correct labels will be gradually boosted because of supervised loss decreasing and general smoothing techniques for example stochastic layer. A more detailed analysis will be reported in \S~\ref{sec:verify}.

\subsection{Architecture}
\label{arch}
 	GAN-based models cannot be directly applied to graph data.
 	To this end, \smodel first uses network embedding methods (e.g., DeepWalk~\cite{Perozzi:14KDD}, LINE~\cite{tang2015line}, or NetMF~\cite{Qiu:2018WSDM}) to learn latent distributed representation $\mathbf{q}_i$ for each node, and then concatenates the latent distribution $\mathbf{q}_i$ with the original feature vector $\mathbf{w}_i$, i.e., $\mathbf{x}_i=(\mathbf{w}_i, \mathbf{q}_i)$. 
 	Finally, 
 	$\mathbf{x}_i$ is taken as input to our method.
 	
 	Figure~\ref{overview} shows the architecture of \model.
	Both classifier $D$ and generator $G$ in \smodel are multiple layer perceptrons.
	More specifically, the generator takes a Gaussian noise $\mathbf{z}$ as input and outputs fake samples having the similar shape as $\mathbf{x}_i$. In the generator, {\it batch normalization} ~\cite{ioffe2015batch} is used. 
 	Generator's output layer is constrained by {\it weight normalization} trick~\cite{salimans2016weight}  with a trainable weight scale.
    Discriminator in GANs is substituted by a classifier, where stochastic layers(additive Gaussian noise) are added after input and full-connected layers for smoothing purpose. Noise is removed in prediction mode. Parameters in full-connected layers are constrained by weight normalization for regularization. 
    Outputs of the last hidden layer in classifier $h^{(n)}(\mathbf{x})$\label{feature} are features extracted by non-linear transformation from input $\mathbf{x}$, which is essential for feature matching~\cite{salimans2016improved} when training generator. The classifier ends with a $(M+1)$-unit output layer and softmax activation. The outputs of unit $0$ to unit $M-1$ can be explained as probabilities of different classes and output of unit $M$ represents probability to be fake. In practice, we only consider the first $M$ units and assume the output for fake class $P_M$ is always 0 before softmax, because subtracting an identical number from all units before softmax does not change the softmax results.

\subsection{Learning Algorithm}
\subsubsection{Game and Equilibrium}
GANs try to generate samples similar to training data but we want to generate fake samples in density gaps. So, the optimization target must be different from original GANs in the proposed \smodel model. For better explanation, we revisit GANs from a more general perspective in game theory.

In a normal two-player game, $G$ and $D$ have their own loss functions and try to minimize them. Their losses are interdependent. We denote the loss functions $\mathcal{L}_G(G, D)$ and $\mathcal{L}_D(G, D)$. Utility functions $V_G(G,D)$ and $V_D(G, D)$ are negative loss functions.
	
GANs define a \emph{zero-sum game}, where $\mathcal{L}_G(G, D) = -\mathcal{L}_D(G, D)$. In this case, the only Nash Equilibrium can be reached by minimax strategies~\cite{von2007theory}. To find the equilibrium is equivalent to solve the optimization:
\beq{\nonumber\min\limits_G\max\limits_D V_D(G,D)}
    
Goodfellow et al. ~\cite{goodfellow2014generative} proved that if $V_D(G,D)$ was defined as that in equation \ref{gan_aim}, $G$ would generate samples subject to data distribution at the equilibrium. The distribution of generate samples $p_g(\mathbf{x})$ is an approximation of the distribution of real data $p_d(\mathbf{x})$. But we want to generate samples in density gaps instead of barely mimicking real data. So, original GANs cannot solve this task.

\begin{figure}[t]
\includegraphics[width=0.75\linewidth]{./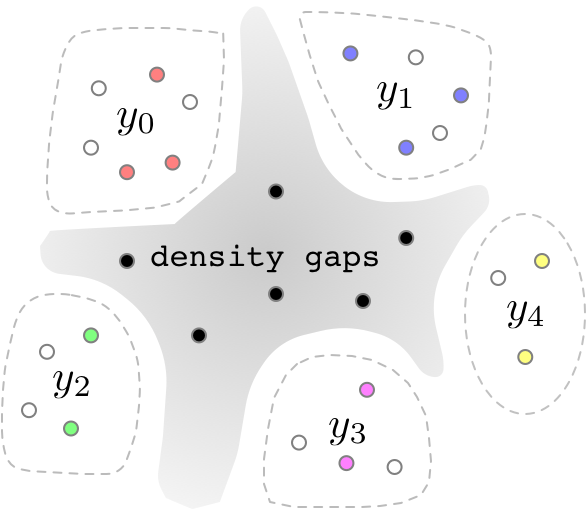}
\caption{An illustration of the expected equilibrium in $h^{(n)}(\mathbf{x})$. The dotted areas are clusters of training data with different labels(colorful points). Unlabeled samples(white points) are mapped into a particular clusters. Distinct density gaps appear in the center, in which lie the generated samples(black points).}\label{expected}
\end{figure}

In the proposed \model, we modify $\mathcal{L}_D(G, D)$ and $\mathcal{L}_G(G, D)$ to design a new game, in which $G$ would generate samples in density gaps at the equilibrium. More precisely, we expect that the real and fake samples are mapped like Figure \ref{expected} in its final representative layer $h^{(n)}(\mathbf{x})$. Because the concept of ``density gap'' is more straightforward in a representative layer than in a graph, we define that a node lies in density gap if and only if it lies in density gap in $h^{(n)}(\mathbf{x})$ layer. How to map nodes into representative layer is explained in section \ref{arch}.

The intuition behind the design is based on the famous phenomenon known as ``curse of dimensionality'' ~\cite{donoho2000high}. In high-dimensional space like $h^{(n)}(\mathbf{x})$, the central area is far more narrow than outer areas. Training data in the central area are easy to become \emph{hubs}~\cite{radovanovic2010hubs}. Hubs frequently occurs in the nearest neighbors of samples from other classes, which might deeply affect semi-supervised learning and become a main difficulty. So, we want the central area become a density gap instead of one of clusters.

We define $\mathcal{L}_D(G, D)$ and $\mathcal{L}_G(G, D)$ as below to guarantee the expected equilibrium.

\beq{\begin{aligned}
\mathcal{L}_D =& loss_{sup} + \lambda_0 loss_{un} + 
\lambda_1 loss_{ent} + loss_{pt}\\
\mathcal{L}_G =& loss_{fm} + \lambda_2 loss_{pt}
\end{aligned}
}

\noindent Next, we explain these loss terms in $\mathcal{L}_D(G, D)$ and $\mathcal{L}_G(G, D)$ and how they take effect in details.

\subsubsection{Discriminative Losses}\label{LD} 
At equilibrium, no player can change their strategy to reduce his loss unilaterally. Supposing that $G$ generates samples in central areas at equilibrium, we put forward four conditions for $D$ to guarantee the expected equilibrium in $h^{(n)}(\mathbf{x})$:

\begin{enumerate}
\item Nodes from different classes should be mapped into different clusters.
\item Both labeled and unlabeled nodes should \textbf{not} be mapped into the central area so as to make it a density gap.
\item Every unlabeled node should be mapped into one cluster representing a particular label.
\item Different clusters should be far away enough.
\end{enumerate}

The most natural way to satisfy condition (1) is a supervised loss $loss_{sup}$. $loss_{sup}$ is defined as the cross entropy between predicted distribution over $M$ classes and one-hot representation for real label. 
    
\begin{equation}
	loss_{sup} = -\mathbb{E}_{\mathbf{x}_i\in X^L} \log P(y_i|\mathbf{x}_i, y_i < M)
\end{equation}

\noindent where $X^L$ is the set of inputs for labeled nodes $V^L$.
	
Condition (2) is equivalent to the $D$'s aim in original GAN given that $G$ generates fake samples in central density gaps. Thus we still use the loss in equation \ref{gan_aim} and call it $loss_{un}$. The classifier $D$ incurs $loss_{un}$ when real-or-fake misclassification happens.  
    
\begin{equation}
\begin{aligned}
	loss_{un} =& -\mathbb{E}_{\mathbf{x}_i \in X^U} \log [1-P(M|\mathbf{x}_i)] \\&- \mathbb{E}_{\mathbf{x}_i \sim G(\mathbf{z})} \log P(M|\mathbf{x}_i)
\end{aligned}
\end{equation}

\noindent where $X^U$ is set of pretreated inputs for unlabeled nodes $V^U$; $G(\mathbf{z})$ is the distribution of generated samples; and $P(M|\mathbf{x}_i)$ denotes the predicted fake probability of $\mathbf{x}_i$. 

Condition (3) requests $D$ to assign an unambiguous label to every unlabeled node. We solve the problem by adding an entropy regularization term $loss_{ent}$, the entropy of distribution over $M$ labels. Entropy is a measurement of uncertainty of probability distributions. It has become a regularization term in semi-supervised learning for a long time~\cite{grandvalet2005semi} and is firstly combined with GANs in ~\cite{springenberg2015unsupervised}. Reducing entropy could encourage the classifier to determine a definite label for every node.

\begin{equation}
	loss_{ent} = -\mathbb{E}_{\mathbf{x}_i \in X^U} \sum\limits_{y=0}^{M-1} P(y|\mathbf{x}_i, y_i < M)\log P(y|\mathbf{x}_i, y_i < M) 
\end{equation}

Condition (4) widens density gaps to help classification. We leverage pull-away term $loss_{pt}$~\cite{zhao2016energy} to satisfy it. $loss_{pt}$ is originally designed for generating diverse samples in ordinary GANs. It is the average cosine distance between vectors in a batch. It keeps representations in $h^{(n)}(\mathbf{x})$ layer as far from the others as possible. Hence, it also encourages clusters to be far from the others.

    \begin{equation}\label{loss_pt}
    loss_{pt} = \frac{1}{m(m-1)}\sum\limits_{i=1}^m\sum\limits_{j\neq i}\frac{h^{(n)}(\mathbf{x}_i)^\top h^{(n)}(\mathbf{x}_j)}{||h^{(n)}(\mathbf{x}_i)||||h^{(n)}(\mathbf{x}_j)||}^2
\end{equation}

\noindent where $\mathbf{x}_i,\mathbf{x}_j$ are in the same batch and $m$ is batch size.

\subsubsection{Generative Losses} Similarly, supposing that $D$ has satisfied the four conditions above, we also have two conditions for $G$ to guarantee the expected equilibrium in $h^{(n)}(\mathbf{x})$:

\begin{enumerate}
\item $G$ generates samples which are mapped into the central area.
\item Generated samples should \textbf{not} overfit at the only center point.
\end{enumerate}

For condition (1), we train $G$ using \emph{feature matching} loss~\cite{salimans2016improved}. It minimizes the distances between generated samples and the center point of real samples $\mathbb{E}_{\mathbf{x}_i \in X^U\cup X^L} h^{(n)}(\mathbf{x}_i)$. Actually, in training process the center point is replaced by center of samples in a real batch $\mathbb{E}_{\mathbf{x}_i \in X_{batch}} h^{(n)}(\mathbf{x}_i)$, which helps satisfy condition (2).
The distances are originally measured in $L2$ norm. (But, in practice, we found that $L1$ norm also works well, with even slightly better performance.)

\begin{equation}
	loss_{fm} = ||\mathbb{E}_{\mathbf{x}_i \in X_{batch}} h^{(n)}(\mathbf{x}_i) - \mathbb{E}_{\mathbf{x}_j \sim G(\mathbf{z})} h^{(n)}(\mathbf{x}_j)||_2^2
\end{equation}

Condition (2) requests generated samples to cover as much central areas as possible. We also use a pull-away loss term(Equation \ref{loss_pt}) to guarantee the satisfication of this condition, because it encourage $G$ to generate diverse samples. A trade-off is needed between centrality and diversity, thus we use a hyper-parameter $\lambda_2$ to balance $loss_{fm}$ and $loss_{pt}$. The stochastic layers in $D$ add noise to fake inputs, which not only improves robustness but also prevents fake samples from overfitting.

\begin{algorithm}[t]
\caption{Minibatch stochastic gradient descent training of \model}\label{algo}
\KwIn{
Node features $\{\mathbf{w}_i\}$, Labels $y^L$, Graph $G=(V,E)$, Embedding Algorithm $\mathcal{A}$, batch size $m$.}
Calculate $\{\mathbf{w}_i'\}$ according to Eq. (\ref{neighbor_fusion}) \\
Calculate $\{\mathbf{q}_i\}$ via $\mathcal{A}$\\
Concatenate $\{\mathbf{w}_i'\}$ with $\{\mathbf{q}_i\}$ for $X^L\cup X^U$ \\
\Repeat{Convergence}{
	Sample $m$ labeled samples $\{\mathbf{x}^L_1,..., \mathbf{x}^L_m\}$ from $X^L$\\
    Sample $m$ unlabeled samples $\{\mathbf{x}^U_1,..., \mathbf{x}^U_m\}$ from $X^U$\\
    Sample $m$ noise samples $\{\mathbf{z}_1,...,\mathbf{z}_m\}$ from $p_\mathbf{z}(\mathbf{z})$\\
    Update the classifier by descending gradients of losses:
    $$\nabla_{\theta_D} \frac{1}{m}\sum loss_{sup} + \lambda_0 loss_{un} + \lambda_1 loss_{ent} + loss_{pt}$$ 
    \For{t steps}{
		Sample $m$ unlabeled samples $\{\mathbf{x}^U_1,..., \mathbf{x}^U_m\}$ from $X^U$\\
    	Sample $m$ noise samples $\{\mathbf{z}_1,...,\mathbf{z}_m\}$ from $p_\mathbf{z}(\mathbf{z})$\\
        Update the generator by descending gradients of losses:
        $$\nabla_{\theta_G} \frac{1}{m} \sum loss_{fm} + \lambda_2 loss_{pt}$$
    }
}
\end{algorithm}

\subsubsection{Training} GANs train $D$ and $G$ by iteratively minimizing $D$ and $G$'s losses. In game theory, it is called \emph{myopic best response}~\cite{aumann1974subjectivity}, an effective heuristic method to find equilibriums. \smodel is also trained in this way. 

The first part of training is to turn nodes in the graph to vectors in feature space. We use LINE~\cite{tang2015line} for pretreatment of $\mathbf{q}$, which performs fast and stable on our dataset. We also test other network embedding algorithms and find similar performances in classification. To accelerate the convergence, nodes' features $\mathbf{w}$ are recalculated using \textit{neighbor fusion} technique. Let $Ne(v_i)$ be the set of neighbors of $v_i$, node $v_i$'s weights are recalculated by

\begin{equation}\label{neighbor_fusion}
\mathbf{w}_i' = \alpha \mathbf{w}_i + \frac{1-\alpha}{|Ne(v_i)|}\sum\limits_{v_j \in Ne(v_i)} \mathbf{w}_j.
\end{equation}

The neighbor fusion idea is similar to the pretreatment tricks using attention mechanisms
~\cite{DBLP:journals/corr/abs-1710-10903}.

In the main training loop, we iteratively trains $D$ and $G$. 
To compute $\mathcal{L}_D$, we need three batches of labeled, unlabeled and generated samples respectively. $loss_{sup}$ needs labeled data. $loss_{un}$ is computed based on unlabeled and generated data. Theoretically, $loss_{un}$ should also take account of labeled data to make sure that they are classified as real. But $loss_{sup}$ has made labeled data classified correctly as its real label so that it is not necessary to consider labeled data in $loss_{un}$. $loss_{ent}$ only considers unlabeled data and $loss_{pt}$ should ``pull away'' both labeled and unlabeled data. 
Usually three hyperparameters are needed to balance the scales of four losses. We only use two parameters $\lambda_0$ and $\lambda_1$ in \smodel because $loss_{sup}$ will soon be optimized nearly to 0 due to few labeled samples under the semi-supervised setting.

Both real and generated batches of data are needed to train $G$. $loss_{fm}$ compares the batch center of real and generated data in $h^{(n)}(\mathbf{x})$ and $loss_{pt}$ measures the diversity of generated data. We always want $G$ to generate samples in central areas, which is an assumption when discussing $\mathcal{L}_D$ in section ~\ref{LD}. So, we train $G$ several steps to convergence after every time we train $D$. 
Detailed process is illustrated in Algorithm \ref{algo}.

\section{Theoretical Basis}\label{sec:theory}
We provide theoretical analyses on why GANs can help semi-supervised learning on graph. In section \ref{sec:mot}, we claim that the working principle is to reduce the influence of labeled nodes across density gaps. In view of the difficulty to directly analyze the dynamics in training of deep neural networks,  
we base the analysis on the graph Laplacian regularization framework. 
\begin{definition}\label{marginal}\textbf{Marginal Node and Interior Node.}
Marginal Nodes $\mathcal{M}$ are nodes linked to nodes with different labels while Interior Nodes $\mathcal{I}$ not. Formally, $\mathcal{M} = \{v_i | v_i \in V \land (\exists v_j \in V, (v_i, v_j) \in E \land y_i \neq y_j)\}$, $\mathcal{I} = V \setminus \mathcal{M}$.
\end{definition}

\begin{assumption} \label{convergence}\textbf{Convergence conditions.} 
When $G$ converges, we expect it to generate fake samples linked to nearby marginal nodes. More specifically, let $V_g$ and $E_g$ be the set of generated fake samples and generated links from generated  nodes to nearby original nodes. we have  $\forall v_g \in V_g, (\exists v_i \in \mathcal{M}, (v_g, v_i) \in E_g) \land (\forall (v_g, v_i) \in E_g, v_i \in \mathcal{M}) $.  
\end{assumption}
\vspace{0.06in}

The loss function of graph Laplacian regularization framework is as follows:

\begin{equation}\label{basic}
	\mathcal{L}(y') = \sum\limits_{v_i\in V^L} loss(y_i, y_i') + \lambda\sum\limits_{v_i,v_j \in V}^{i\neq j} \alpha_{ij}\cdot neq(y_i', y_j')
\end{equation}

\noindent where $y_i'$ denotes predicted label of node $v_i$. The $loss(\cdot, \cdot)$ function measures the supervised loss between real and predicted labels. $neq(\cdot, \cdot)$ is a  0-or-1 function representing {\it not equal}.

\begin{equation}\label{alpha}
\alpha_{ij} = \tilde{A_{ij}} = \frac{A_{ij}}{\sqrt{deg(i)deg(j)}}, (i \neq j)
\end{equation} 

\noindent where $A$ and $\tilde{A}$ are the adjacent matrix and negative normalized graph Laplacian matrix, and $deg(i)$ means the degree of $v_i$. It should be noted that our equation is slightly different from \cite{zhou2004learning}'s because we only consider explicit predicted label rather than label distribution.

Normalization is the core of reducing the marginal nodes' influence. Our approach is  simple: generating fake nodes, linking them to nearest real nodes and solving graph Laplacian regularization. \emph{Fake} label is not allowed to be assigned to unlabeled nodes and loss computation only considers edges between real nodes. The only difference between before and after generation is that marginal nodes' degree changes. And then the regularization parameter $\alpha_{ij}$ changes. 

\subsection{Proof}
We analyze 
how generated fake samples help acquire correct classification.

\begin{corollary}\label{decrease} Under Assumption \ref{convergence}, let $\mathcal{L}(\mathcal{C}_{gt})$ and $\mathcal{L}(\mathcal{C}_{gt})'$ be losses of ground truth on graph $(V + V_g, E + E_g)$ and $(V + V_g', E + E_g')$. We have $\forall V_g \supsetneqq V_g'$, $\mathcal{L}(\mathcal{C}_{gt}) < \mathcal{L}(\mathcal{C}_{gt})'$, where $V_g$ and $E_g$ are set of generated nodes and edges.
\end{corollary}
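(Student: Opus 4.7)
The plan is to evaluate $\mathcal{L}(\mathcal{C}_{gt})$ on each augmented graph by direct inspection of equation (\ref{basic}), and to track how each surviving summand changes when $V_g'$ is enlarged to $V_g$. At the ground-truth labeling the supervised term $\sum_{v_i\in V^L} loss(y_i,y_i')$ vanishes because $y_i'=y_i$ for every labeled node, so the entire comparison reduces to the Laplacian regularization term $\lambda \sum_{i\neq j,\, v_i,v_j\in V} \alpha_{ij}\cdot neq(y_i,y_j)$.

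Next I would argue that the only terms contributing to this remaining sum are pairs $(v_i,v_j)$ of real nodes such that (i) $A_{ij}\neq 0$, i.e.\ $(v_i,v_j)\in E$, and (ii) $y_i\neq y_j$. By Definition \ref{marginal}, both endpoints of any such contributing edge are marginal nodes, so every nonzero $\alpha_{ij}$ in $\mathcal{L}(\mathcal{C}_{gt})$ has both endpoints in $\mathcal{M}$. Since the loss sum is taken only over pairs of real nodes, fake nodes never appear as endpoints; their only effect on $\mathcal{L}(\mathcal{C}_{gt})$ is via the normalization $\alpha_{ij} = A_{ij}/\sqrt{deg(i)deg(j)}$, where the degrees now include the fake edges in $E_g$.

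I would then compare the two settings endpoint-by-endpoint. Interior nodes $\mathcal{I}$ receive no fake edges under Assumption \ref{convergence}, so their degrees are identical on both graphs. For marginal nodes, Assumption \ref{convergence} forces every node in $V_g\setminus V_g'$ to contribute at least one edge landing in $\mathcal{M}$; therefore at least one marginal node has its degree strictly larger under $(V_g,E_g)$ than under $(V_g',E_g')$, while no marginal degree decreases. Because every such marginal node has, by Definition \ref{marginal}, at least one cross-label real neighbor, at least one term $\alpha_{ij}$ in the regularization sum strictly decreases, and none increase. Summing with the (unchanged, zero) supervised contribution yields $\mathcal{L}(\mathcal{C}_{gt}) < \mathcal{L}(\mathcal{C}_{gt})'$.

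The main obstacle I anticipate is the bookkeeping around what ``the graph'' means in the formula: making precise that $A$ still refers only to edges among real nodes (so no new $neq$ terms are introduced by $E_g$), while $deg(\cdot)$ uses the augmented degree sequence (so the normalization shrinks). Once that convention is stated, the strict inequality follows from combining two simple facts—marginal endpoints are exactly where contributing $\alpha_{ij}$ live, and Assumption \ref{convergence} guarantees a strict degree increase at a marginal node whenever $V_g\supsetneqq V_g'$—so the technical content is minimal beyond the monotonicity of $x\mapsto 1/\sqrt{x}$.
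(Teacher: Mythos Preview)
Your proposal is correct and follows exactly the paper's own reasoning; the paper dispatches the corollary in one sentence (``can be easily deduced because of $\alpha_{ij}$ decreasing''), and your write-up simply unpacks that sentence by identifying that the supervised term vanishes at ground truth, that the only surviving $\alpha_{ij}$ have both endpoints in $\mathcal{M}$, and that Assumption~\ref{convergence} forces at least one such marginal degree to strictly increase. The bookkeeping concern you flag---that $A$ ranges over real edges while $deg(\cdot)$ uses the augmented degrees---is precisely what the paper stipulates just before the corollary (``loss computation only considers edges between real nodes'' and ``the only difference \ldots\ is that marginal nodes' degree changes''), so there is no gap.
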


Corollary \ref{decrease} can be easily deduced because of $\alpha_{ij}$ decreasing. Loss of ground truth continues to decrease along with new fake samples being generated. That indicates ground truth is more likely to be acquired. However, there might exist other classification solutions whose loss decreases more. Thus, we will further prove that we can make a perfect classification under reasonable assumptions with adequate generated samples.
 
\begin{definition} \textbf{Partial Graph.} We define the subgraph induced by all nodes labeled $c$ (aka. $V_c$) and their other neighbors $Ne_c$ as partial graph $G_c$.
\end{definition}
\begin{assumption}\label{connectivity} \textbf{Connectivity.}
The subgraph induced by all interior nodes in each class is connected. Besides, every marginal node connects to at least one interior node in the same class.
\end{assumption}

\hide{
\begin{assumption}\label{necessary} \textbf{Necessary Labels.} 
There is at least one labeled node in every class.
\end{assumption}
}

Most real-world networks are dense and big enough to satisfy Assumption \ref{connectivity}. 
There actually implies another weak assumption that at least one labeled node exists for each class. This is the usually guaranteed by the setting of semi-supervised learning.
Let $m_c$ be the number of edges between marginal nodes in $G_c$. Besides, we define $deg_c$ as the maximum of degrees of nodes in $G_c$ and $loss_i$ as the supervised loss for misclassified labeled node $v_i$.
	
\begin{theorem} \textbf{Perfect Classification.} 
If enough fake samples are generated such that $\forall v \in \mathcal{M}, deg(v) > d_0$, all nodes will be correctly classified. $d_0$ is the maximum of $\max\limits_cm_c^2deg_c$  and $\max\limits_{c,v_i\in V_c}\frac{\lambda m_c}{loss_i}$.
\end{theorem}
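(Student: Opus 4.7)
The plan is to prove that once the fake-sample generation pushes every marginal node's degree past $d_0$, the ground-truth labeling $\mathcal{C}_{gt}$ becomes the unique global minimizer of the regularized loss in equation (\ref{basic}), so the Laplacian step necessarily recovers it. I would fix an arbitrary $\mathcal{C}\neq\mathcal{C}_{gt}$ and establish $\mathcal{L}(\mathcal{C})>\mathcal{L}(\mathcal{C}_{gt})$ by an energy-difference argument localized to a single partial graph $G_c$ where the two labelings disagree.

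First I would upper-bound $\mathcal{L}(\mathcal{C}_{gt})$. Its supervised term is zero, and every cross-class edge connects two marginal nodes by the definition of $\mathcal{M}$, so after fake generation both endpoints have degree exceeding $d_0$ and each such edge contributes $\alpha_{ij}<1/d_0$. Since $G_c$ contains at most $m_c$ marginal-marginal edges, the local budget is $\mathcal{L}(\mathcal{C}_{gt})\bigr|_{G_c}<\lambda m_c/d_0$; this is the largest amount of regularization cost that any relabeling inside $G_c$ could conceivably save.

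Next, I would write $\Delta_c=\{v\in V_c:\mathcal{C}(v)\neq c\}$ and focus on a class with $\Delta_c\neq\emptyset$. \textbf{Case A:} $\Delta_c$ contains some labeled node $v_i$. Then $\mathcal{C}$ pays at least $loss_i$ in supervised cost, and condition (2), $d_0>\lambda m_c/loss_i$, gives $loss_i>\lambda m_c/d_0$, which already dominates the savings bound from Step 1, so $\mathcal{L}(\mathcal{C})>\mathcal{L}(\mathcal{C}_{gt})$. \textbf{Case B:} $\Delta_c$ contains only unlabeled nodes, hence $V_c\setminus\Delta_c$ is nonempty. By the Connectivity assumption, the interior subgraph of class $c$ is connected and every marginal node in $V_c$ attaches to an interior node in $V_c$, which forces a cut edge between $\Delta_c$ and $V_c\setminus\Delta_c$ inside $G_c$. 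If the cut contains an interior-interior edge, its weight is $\alpha_{ij}\geq 1/deg_c$; otherwise every flipped node is marginal and each one yields a marginal-interior cut edge of weight $\alpha_{ij}\geq 1/\sqrt{d_0\cdot deg_c}$. In both situations the new regularization contribution is at least $\lambda/\sqrt{d_0\cdot deg_c}$, and condition (1), $d_0>m_c^2 deg_c$, rearranges exactly to $\lambda/\sqrt{d_0\cdot deg_c}>\lambda m_c/d_0$, giving the strict gap.

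The main obstacle I anticipate is the marginal-interior sub-case of Case B, because the weight of the new cut edge is a decreasing function of the marginal node's post-generation degree. Making the lower bound $\alpha_{ij}\geq 1/\sqrt{d_0\cdot deg_c}$ honest requires interpreting ``enough fake samples'' as adding them just beyond the threshold so that marginal degrees are close to $d_0$; this is precisely what forces $m_c^2$ (rather than $m_c$) in condition (1), since that worst-case cut weight must still overcome the up-to-$m_c$ savings of marginal-marginal cross-class edges of weight below $1/d_0$. A secondary technical point is that $\mathcal{C}$ can flip nodes spanning several partial graphs; however, since the savings inside each $G_{c'}$ are bounded by its own $m_{c'}$ and the dominating new-cut argument is local to a single $G_c$, no cross-class cancellation defeats the per-class inequality, and summing yields $\mathcal{L}(\mathcal{C})>\mathcal{L}(\mathcal{C}_{gt})$ as required.
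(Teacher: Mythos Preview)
Your high-level plan---show $\mathcal{C}_{gt}$ is the unique minimizer by upper-bounding its cost and lower-bounding any deviation---matches the paper's spirit, but the execution diverges exactly where you flag the obstacle, and your proposed workaround does not close the gap.

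The paper does \emph{not} try to lower-bound a single marginal--interior cut weight directly. Instead it uses an \emph{exchange argument}: suppose the optimum assigns a marginal $v_i\in V_c$ a wrong label while its interior neighbour $v_j$ already carries $c$. Flip $v_i$ back to $c$. The loss gains at most $m_c$ marginal--marginal terms, each $\le 1/\sqrt{deg(v_i)\,d_0}$, and drops at least one marginal--interior term $\ge 1/\sqrt{deg(v_i)\,deg_c}$. The factor $1/\sqrt{deg(v_i)}$ appears on \emph{both} sides and cancels, leaving
\[
\Delta \;\le\; \frac{1}{\sqrt{deg(v_i)}}\Bigl(\frac{m_c}{\sqrt{d_0}}-\frac{1}{\sqrt{deg_c}}\Bigr)<0,
\]
which is precisely condition~(1), $d_0>m_c^2deg_c$, and holds no matter how large $deg(v_i)$ becomes.

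Your direct bound $\alpha_{ij}\ge 1/\sqrt{d_0\cdot deg_c}$ has no such cancellation: it needs $deg(v_i)\le d_0$, whereas the theorem only asserts $deg(v_i)>d_0$ with no upper cap. Reading ``enough fake samples'' as ``just past the threshold'' is not what the statement says and also runs against Corollary~1, whose whole point is monotonicity---more fake samples can only help. So the marginal sub-case of your Case~B is a genuine gap; replacing the cut-weight bound by the swap argument repairs it and, incidentally, explains why $m_c^2$ (not $m_c$) is the right constant: after cancellation one is comparing $m_c/\sqrt{d_0}$ with $1/\sqrt{deg_c}$.

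A secondary structural difference: rather than a global energy comparison, the paper localizes to one partial graph $G_c$ with $Ne_c$ frozen at an arbitrary wrong label $c'$, proves the local optimum labels all of $V_c$ as $c$ (interior nodes first via the $\alpha_{ij}\ge 1/deg_c$ bound you also use, marginal nodes next via the swap), and then observes the conclusion is insensitive to which wrong labels $Ne_c$ actually carries. This sidesteps the cross-class savings bookkeeping you allude to at the end.
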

\begin{proof}
We firstly consider a simplified problem in partial graphs $G_c$, where nodes from $Ne_c$ have already been assigned fixed label $c'$. We will prove that the new optimal classification $\mathcal{C}_{min}$ are the classification $\mathcal{C}'$, which correctly assigns $V_c$ label $c$. Since $\mathcal{L}(\mathcal{C}_{min})<\mathcal{L}(\mathcal{C}') < \lambda m_c\cdot \frac{1}{\sqrt{d_0\cdot d_0}} < \lambda m_c / \max\limits_{c,v_i\in V_c}\frac{\lambda m_c}{loss_i} \leq \min\limits_{v_i\in V_c}loss_i$, optimal solution $\mathcal{C}_{min}$ should classify all labeled nodes correctly. 

Suppose that $\mathcal{C}_{min}$ assigns $v_i,v_j\in \mathcal{I}$ with different labels. The inequality $\mathcal{L}(\mathcal{C}_{min})\geq \lambda\alpha_{ij} = \frac{\lambda}{\sqrt{deg(v_i)deg(v_j)}}\geq \frac{\lambda}{deg_c} \geq \frac{\lambda}{m_cdeg_c} \geq \frac{\lambda m_c}{d_0} > loss_{\mathcal{C}'}$ would result in contradiction.
According to analysis above and Assumption \ref{connectivity}, all interior nodes in $G_c$ are assigned label a $c$ in $\mathcal{C}_{min}$.

Suppose that $\mathcal{C}_{min}$ assigns $v_i \in \mathcal{M}\cap V_c, v_j\in \mathcal{I}$ with different labels and $(v_i, v_j)\in E$. Let $v_i$  be assigned with $c'$. If we change $v_i$'s label to $c$, then $\alpha_{ij}$ between $v_i$ and its interior neighbors will be excluded from the loss function. But some other edges weights between $v_i$ and its marginal neighbors might be added to the loss function. Let $\lambda\Delta$ denotes the variation of loss. 
The following equation will show that the decrease of the loss would lead to a contradiction.

\begin{equation*}
\footnotesize
\begin{split}
\Delta \leq & \sum\limits_{\substack{v_k\in \mathcal{M}\\(v_i, v_k) \in E_c}}\frac{1}{\sqrt{deg(v_i)deg(v_k)}} -  \sum\limits_{\substack{v_j\in \mathcal{I}\\(v_i, v_j) \in E_c}}\frac{1}{\sqrt{deg(v_i)deg(v_j)}} \\ 
\leq & \frac{1}{\sqrt{deg(v_i)}} (\frac{m_c}{\sqrt{\max\limits_cm_c^2deg_c}} - \sum\limits_{\substack{v_j\in \mathcal{I}\\(v_i, v_j) \in E_c}}\frac{1}{\sqrt{deg(v_j)}}) < 0
\end{split}
\normalsize
\end{equation*}

Suppose that $\mathcal{C}_{min}$ avoids all situations discussed above while $v_i \in \mathcal{M}\cap V_c$ is still assigned with $c'$. Under Assumption \ref{connectivity}, there exists an interior node $v_j$ connecting with $v_i$. As we discussed, $v_j$ must be assigned $c$ in $\mathcal{C}_{min}$, leading to contradiction. 
Therefore, $\mathcal{C}'$ is the only choice for optimal binary classification in $G_c$. That means all nodes in class $c$ are classified correctly. But what if in $G_c$ not all nodes in $Ne_c$ are labeled $c'$? 
Actually no matter which labels they are assigned, all nodes in $V_c$ are classified correctly. If nodes in $Ne_c$ are assigned labels except $c$ and $c'$, the proof is almost identical and $\mathcal{C}'$ is still optimal. If any nodes in $Ne_c$ are mistakenly assigned with label $c$, the only result is to encourage nodes to be classified as $c$ correctly. 

Finally, the analysis is correct for all classes thus all nodes will be correctly classified.
\end{proof}

\section{Experiments}\label{sec:experiments}

We conduct experiments on two citations networks and one entity extraction dataset.
Table \ref{table:dataset} summaries statistics of the three datasets.

To avoid over-tuning the network architectures and hyperparameters, in all our experiments we use a default settings for training and test.
Specifically, the classifier has 5 hidden layers with $(500,500,250,250,250)$ units. Stochastic layers are zero-centered Gaussian noise, with 0.05 standard deviation for input and 0.5 for outputs of  hidden layers. Generator has two 500-units hidden layers, each followed by a {\it batch normalization} layer. {\it Exponential Linear Unit} (ELU)~\cite{clevert2015fast} is used for improving the learning accuracy 
except the output layer of $G$, which instead uses $tanh$ to generate samples ranging from $-1$ to $1$. The trade-off factors in Algorithm~\ref{algo} are $\lambda_0=2,\lambda_1=1,\lambda_2=0.3$. Models are optimized by Adam~\cite{kingma2014adam}, where $\beta_1=0.5,\beta_2=0.999$. All parameters are initialized with Xavier~\cite{glorot2010understanding} initializer. 

\begin{table}[b]
 \setlength{\tabcolsep}{4pt}
\caption{Dataset statistics}\label{table:dataset}
\centering
\begin{tabular}{cccccc}
\hline
Dataset&nodes&edges&features&classes&labeled data\\
\hline
Cora & 2,708 & 5,429 & 1,433 & 7 & 140 \\
Citeseer& 3,327 & 4,732 & 3,703 & 6 & 120 \\
DIEL & 4,373,008 & 4,464,261 & 1,233,597 & 4 & 3413.8 \\
\hline
\end{tabular}
\end{table}

\subsection{Results on Citation Networks}\label{citation}
The two citation networks contain papers and citation links between papers.
Each paper has features represented as a bag-of-words and belongs to a specific class based on topic, such as ``database'' or ``machine learning''.
The goal is to classify all papers into its correct class.
For fair comparison, we follow exactly the experimental setting in \cite{yang2016revisiting}, where for each class 20 random instances (papers) are selected as labeled data and 1,000 instances as test data. The reported performance is the average of ten random splits.
In both datasets, we compare our proposed methods with three categories of methods:
\begin{itemize}
	\item \textbf{regularization-based methods} \\ including LP~\cite{zhu2002learning}, ICA~\cite{lu2003link}, and ManiReg~\cite{belkin2006manifold};
	\item \textbf{embedding-based methods} \\ including DeepWalk~\cite{Perozzi:14KDD}, SemiEmb~\cite{weston2012deep}, and Planetoid~\cite{yang2016revisiting};
	\item and \textbf{convolution-based methods}\\ including Chebyshev~\cite{defferrard2016convolutional}, GCN~\cite{kipf2016semi} and GAT~\cite{DBLP:journals/corr/abs-1710-10903}.
\end{itemize}
    
We train our models using {\it early stop} with 500 nodes for validation (average 20 epochs on Cora and 35 epochs on Citeseer). Every epoch contains 100 batches with batch size 64.
Table~\ref{tb:citation} shows the results of all comparison methods. Our method significantly outperforms all the regularization- and embedding-based methods, and also performs much better than Chebyshev and graph convolution networks (GCN), meanwhile slightly better than GCN with attentions (GAT). Compared with convolution-based methods, \smodel is more sensitive to labeled data and thus a larger variance is observed. The large variance might originate from the instability of training of GANs. For example, the \textit{mode collapse} phenomenon~\cite{theis2016note} will hamper \smodel from generating fake nodes evenly in density gaps. 
These instabilities are currently main problems in research of GANs. More advanced techniques for stabilizing \smodel are left for future work.

\begin{table}[t]
\caption{\label{tb:citation}Summary of results of classification accuracy (\%).}\label{results}
\centering
\begin{tabular}{c|ccc}
\hline \hline
Category&Method & Cora & Citeseer\\
\hline
\multirow{3}{*}{Regularization} & LP & 68.0 & 45.3\\
& ICA & 75.1 & 69.1\\
& ManiReg & 59.5 & 60.1 \\
\hline
\multirow{3}{*}{Embedding} & DeepWalk & 67.2 & 43.2\\ 
& SemiEmb & 59.0 & 59.6\\
& Planetoid & 75.7 & 64.7\\ 
\hline
\multirow{3}{*}{Convolution} 
& Chebyshev & 81.2 & 69.8\\ 
&GCN & 80.1 $\pm$ 0.5 & 67.9 $\pm$ 0.5\\
& GAT & \textbf{83.0 $\pm$ 0.7} & 72.5 $\pm$ 0.7\\
\hline
Our Method &\model & \textbf{83.0 $\pm$ 1.3} & \textbf{73.1 $\pm$ 1.8}\\
\hline \hline
\end{tabular}
\end{table}

\subsection{Verification}
\label{sec:verify}

\begin{figure*}[htbp]
 \centering
  \includegraphics[width=1.0\textwidth]{./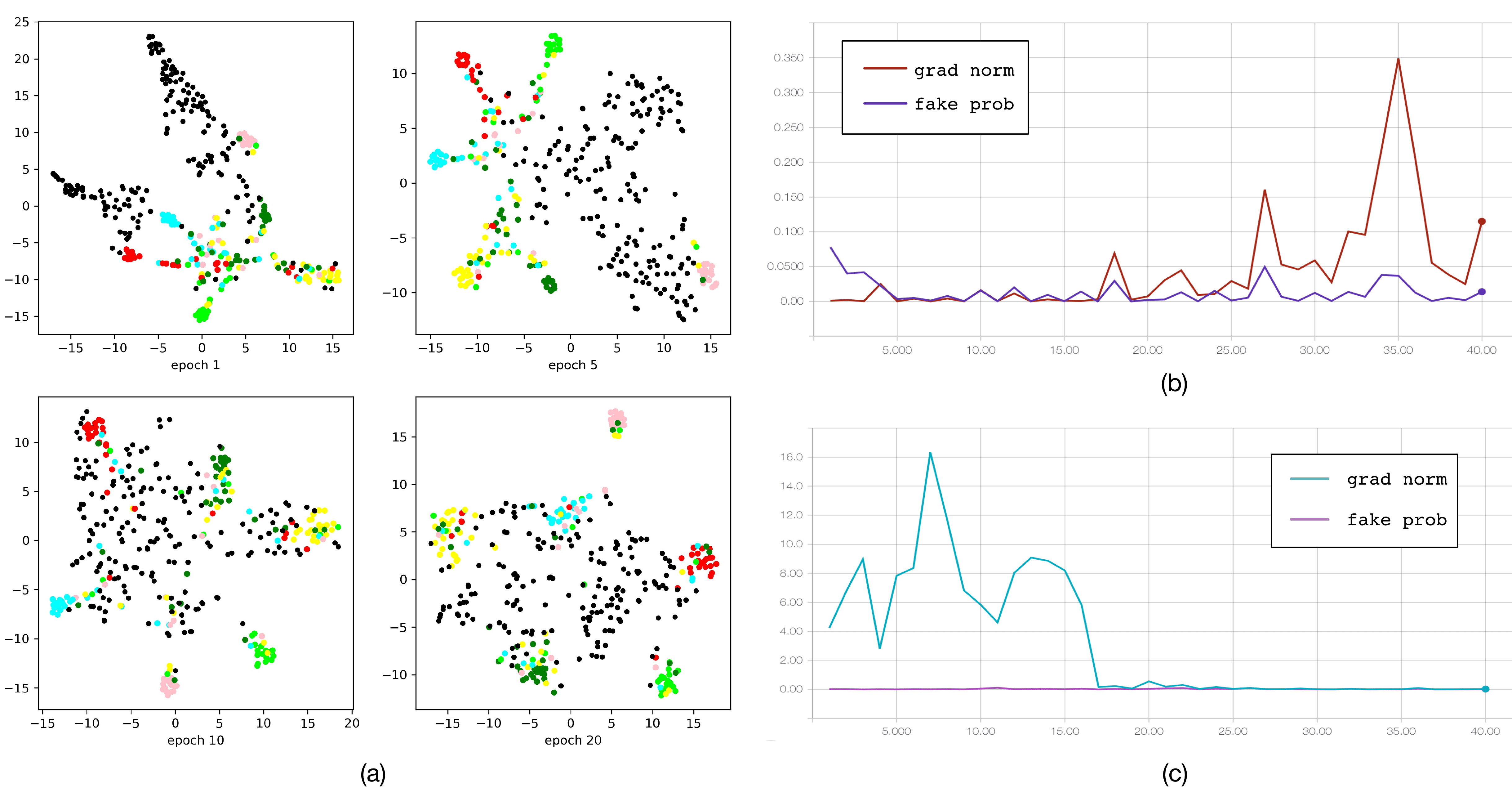}
 \caption{(a)Visualization of outputs of feature layer during training. Color of point indicates its class, where black nodes indicate fake samples. 
 	 (b)Typical $||g||$ and $p_f$ curve for marginal nodes. Horizontal axis represents the number of training iterations, vertical axis representing the value. (c) Typical $||g||$ and $p_f$ curve for interior nodes.}\label{process}
 \end{figure*}

We provide more insights into \smodel with experimental verifications. There are two verification experiments: one is about the expected equilibrium, and the other verifies the working principles of \model.

\subsubsection{Verification of equilibrium}
The first experiment 
is about whether \smodel converges at the equilibrium described in \S~\ref{LD}. In Figure \ref{process}(a), we visualize the training process in the Citeseer experiment using t-SNE algorithm (Cf. \S~\ref{sec:experiments} for detailed experimental settings). At the beginning, $G$ generates samples very different from real samples and the boundaries of clusters are ambiguous. During training, classifier $D$ gradually learns a non-linear transformation in $h^{(n)}(\mathbf{x})$ to map real and fake samples into distinct clusters, while $G$ tries to generate samples in the central areas of real samples. Mini-batch training, $loss_{pt}$ in $\mathcal{L}_G$ and Gaussian noises prevent $G$ from overfitting on the only center point. Adversarial training finally reaches the expected equilibrium where fake samples are in the central areas surrounded by real samples clusters after 20 epochs. 

\subsubsection{Verification of working principles}

The second experiment is to verify the proposed working principles. We have proved in \S~\ref{sec:theory} theoretically that reducing the influence of marginal nodes can help classification. But we should further verify whether generated samples reduce the influence of marginal nodes. On one hand, nodes are mapped into distinct and far clusters in $h^{(n)}(\mathbf{x})$. On the other hand, the ``influence'' is related with ``smooth degree''. For example in graph Laplacian regularization framework, difference between labels of adjacent nodes are minimized explicitly to guarantee the smoothness. 
Thus we examine classifier function's smooth degree around density gaps. Smooth degrees at $x_i$ are measured  by the norm of the gradient of maximum in probabilities for each class. 

	\[||g(\mathbf{x}_i)|| = ||\nabla_{\mathbf{x}_i} \max\limits_{y=0}^{M-1} P(y |\mathbf{x}_i, y_i < M)||\]

Let $p_f(\mathbf{x}_i)$ be predicted fake probability of $\mathbf{x}_i$. We draw curves of $||g(\mathbf{x}_i)||$ and $p_f(\mathbf{x}_i)$ during training. Figure~\ref{process}(b)(c) show two representative patterns: (b) is a marginal node, whose $||g(\mathbf{x}_i)||$ and $p_f(\mathbf{x}_i)$ change synchronously and strictly share the same trend, while (c) is an interior node never predicted fake. The classifier function around (c) remains smooth after determining a definite label. Pearson correlation coefficient $$r_p = \frac{cov(||g||, p_f)}{\sigma_{||g||}\sigma_{p_f}}$$ exceeds 0.6, indicating obvious positive correlation.

\subsection{Results on Entity Extraction}
\label{diel}
The DIEL dataset~\cite{bing2015improving} is a dataset for information extraction. It contains pre-extracted features for each entity mentions in text, and a graph connecting entity mentions to corresponding coordinate-item lists. The objective is to extract medical entities from items given feature vectors, the graph topologies and a few known medical entities.

Again, we follow the same setting as in the original paper~\cite{bing2015improving} for the purpose of comparison, including data split and the average of different runs.
Because the features are very high dimensional sparse vectors, we reduce its dimensions to 128 by Truncated SVD algorithm. We use neighbor fusion on {\it item string nodes} with $\alpha = 0$ as only {\it entity mention nodes} have features. 
We treat the top-$k$ ($k$ = 240,000) entities given by a model as positive, and compare 
recall of top-$k$ ranked results by different methods.
Note that as many items in ground truth do not appear in text, the upper bound of recall is 0.617 in this dataset.

We also compare it with the different types of methods. 
Table~\ref{diel_result}
reports the average recall@$k$ of standard data splits for 10 runs by all the comparison methods.
{\it DIEL} 
represents the method in original paper~\cite{bing2015improving}, which uses outputs of multi-class label propagation to train a classifier. 
The result of Planetoid is the inductive result which shows the best performance among three versions. As the DIEL dataset has millions of nodes and edges, which makes full-batch training, for example GCN, infeasible(using sparse storage, memory needs 
$>$ 200GB),  we do not report GCN and GAT here. From Table~\ref{diel_result}, we see that our method \smodel achieves the best performance, significantly outperforming all the comparison methods ($p-$value$\ll$0.01, $t-$test).

\subsection{Space Efficiency}

Since GCN cannot handle large-scale networks, we examine the memory consumption of \smodel in practice. GPU has become the standard platform for deep learning algorithms. The memory on GPU are usually very limited compared with the main memory. We compare the GPU memory consumption of four representative algorithms from different categories in Figure \ref{bar}. Label Propagation does not need GPU, we show its result on CPU for the purpose of comparison. 

For small dataset, \smodel consumes the largest space due to the most complex structure. But for large dataset, \smodel uses the least GPU memories. LP is usually implemented by solving equations, whose space complexity is $O(N^2)$. Here we use the ``propagation implementation'' to save space. GCN needs full-batch training and cannot handle a graph with millions of nodes. Planetoid and \smodel are trained in mini-batch way so that the space consumption is independent of the number of nodes. High dimensional features in DIEL dataset are also challenging. Planetoid uses sparse storage to handle sparse features and \smodel reduces the dimension using Truncated SVD.

\subsection{Adversarial Label Propagation}
 \label{subsec:morediscussions}

Will adversarial learning help conventional semi-supervised learning algorithms?
Yes, we have proved theoretically that reducing the influence of marginal nodes can help classification in \S~\ref{sec:theory}. So, we further propose an adversarial improvement for graph Laplacian regularization with generated samples to verify our proofs (Cf. \S~\ref{sec:theory}).
We conduct the experiment by incorporating adversarial learning into the
Label Propagation framework~\cite{zhou2004learning} to see whether the performance can be improved or not.
To incorporate nodes' features, we reconstruct graph by linking nodes to their $k$ nearest neighbors in feature space $\mathbf{x}$. We use the Citeseer network in this experiment, whose settings are described in \S~\ref{citation} meanwhile $k=10$.
 Generating enough fake data is time-consumable, therefore we directly determined marginal nodes by $p_f(\mathbf{x}_i) >\tau $ ($\tau$ is a threshold), because $$\mathbb{E}\frac{p_g(\mathbf{x})}{p_g(\mathbf{x}) + p_{data}(\mathbf{x})} =p_f(\mathbf{x})$$ 
\noindent where $p_f(\mathbf{x})$ is the probability of $\mathbf{x}$ to be fake.

Then, we increase marginal nodes' degree up to $r$ to reduce their influence. 

Figure~\ref{adv} shows the results. Smaller $p_f$ means more marginal nodes. The curves indicates that $\tau = 0.1$ is a good threshold to discriminate marginal and interior nodes. 
A bigger $r$ always perform better, which encourages us to generate as much fake nodes in density gaps as possible.
The performance of LP has been improved by increasing marginal nodes' degree, but still underperforms \model. 

In our opinion, the main reason is that \smodel uses neural networks to capture high-order interrelation between features. 
Thus, we also try to reconstruct the graph using the first layer's output of a supervised multi-layer perceptron and further observed improvements in performance, highlighting the power of neural networks in this problem. 

\begin{table}
\centering
\caption{Recall@k on DIEL dataset (\%).}\label{diel_result}
\begin{tabular}{c|cc}
\hline \hline
Category&Method & Recall@K\\
\hline
\multirow{2}{*}{Regularization} & LP & 16.2\\
&ManiReg & 47.7 \\
\hline
\multirow{3}{*}{Embedding}& DeepWalk & 25.8 \\
&SemiEmb & 48.6 \\
&Planetoid & 50.1 \\
\hline
Original&\textit{DIEL} & 40.5\\
\hline
Our Method& \model & \textbf{51.8} \\
\hline
&Upper bound & 61.7\\
\hline \hline
\end{tabular}
\end{table}

\begin{figure}
	\centering
	\includegraphics[width=0.85\linewidth]{./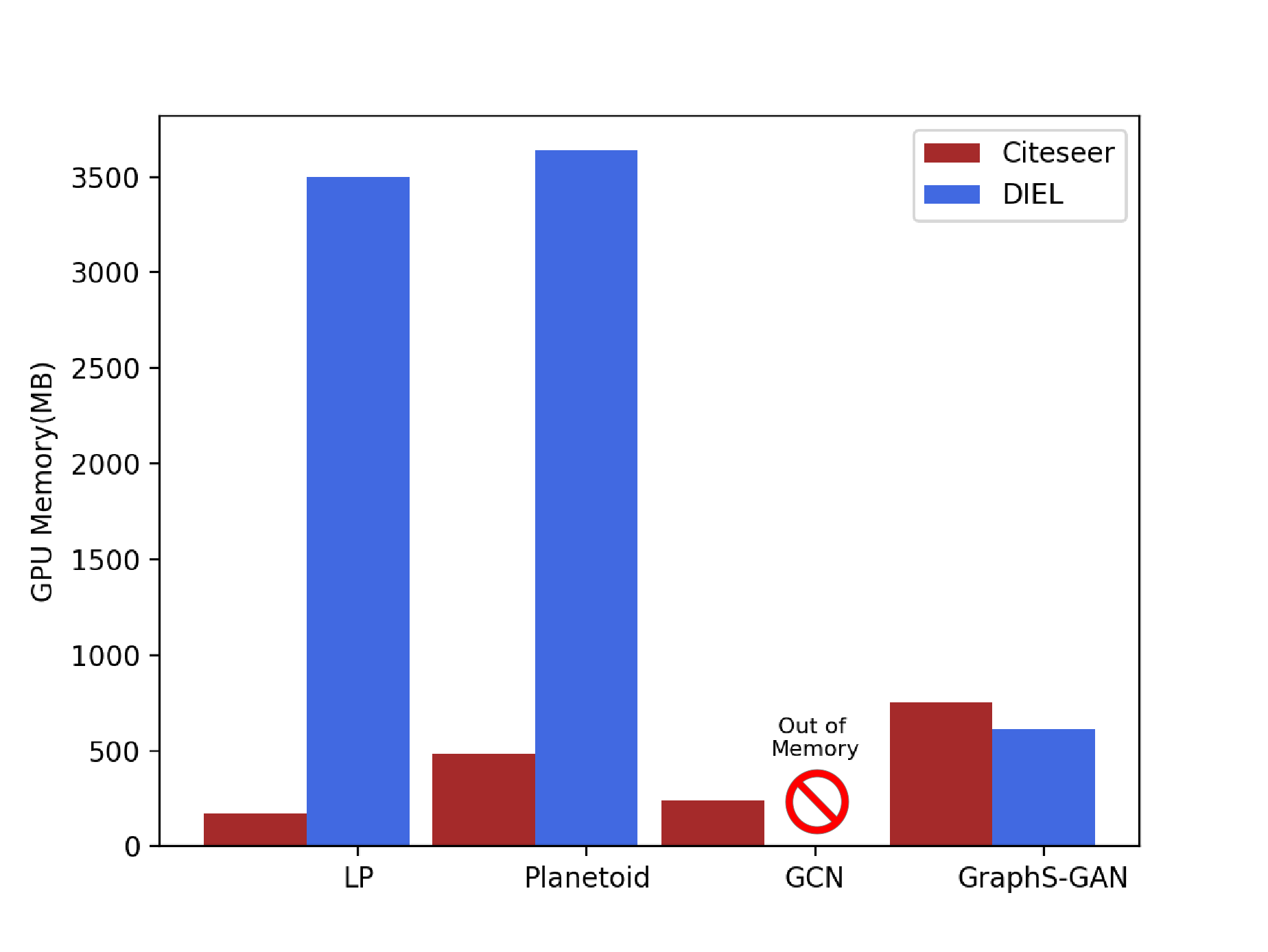}
	\caption{GPU Memory Consumption of four typical semi-supervised learning algorithms on graphs. GCN cannot handle large-scale networks. Planetoid and \smodel are trained in mini-batch way, which makes them able to scale up. We reduce the dimensions of features in DIEL dataset(Cf. \S~\ref{diel}) so that the GPU consumption of \smodel even decreases.} \label{bar}
\end{figure}

\begin{figure}
	\centering
	\includegraphics[width=0.9\linewidth]{./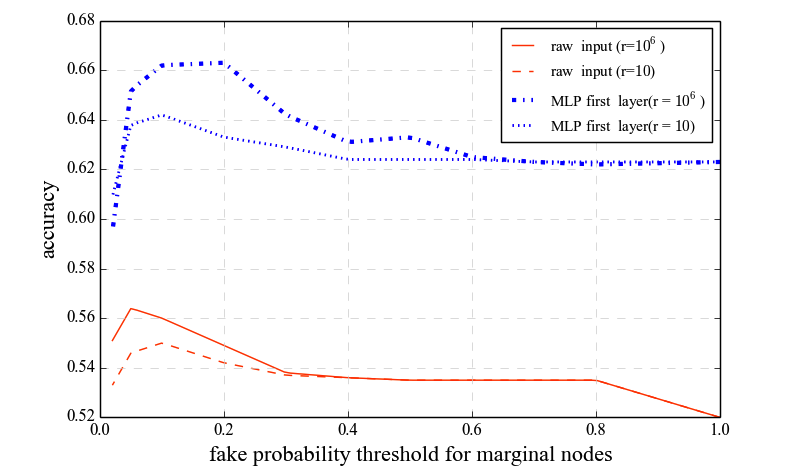}
	\caption{Performance of LP with adversarial learning. Larger threshold means less marginal nodes. Two curves below take $x_i$ as inputs and curves above use outputs of the first layer of MLP. 
	} \label{adv}
\end{figure}

\section{Related Work}\label{sec:related}
Related works mainly fall into three categories: Algorithms for semi-supervised learning on graphs, GANs for semi-supervised learning and GAN-based applications on graphs. We discuss them and summarize the main differences between our proposed model and these works as follows:
\subsection{Semi-supervised Learning on Graphs}
	As mentioned in \S~\ref{sec:intro}, previous methods for this task can be divided into three categories. 
    
Label Propagation~\cite{zhu2002learning} is the first work under the graph Laplacian framework. Labeled nodes continues to propagate their labels to adjacent nodes until convergence. After revealing the relationship between LP and graph Laplacian regularization~\cite{zhu2003semi}, the method are improved by sophisticated smoothing regularizations~\cite{zhu2003semi,belkin2006manifold} and bootstrap method~\cite{lu2003link}. This kind of methods mainly focus on local smoothness but neglect clustering property of graphs, making situations like Figure~\ref{example} hard cases. 
        
Deepwalk~\cite{Perozzi:14KDD} is the first work for graph embedding. As an unsupervised method to learn latent representations for nodes, DeepWalk can easily be turned to a semi-supervised baseline model if combined with SVM classifier. Since labels help learn embeddings and then help classification, Planetoid~\cite{yang2016revisiting} jointly learns graph embeddings and predicts node labels.
Graph embedding becomes one step in \smodel and we incorporate GANs for better performance.  

GCN~\cite{kipf2016semi} is the first graph convolution model for semi-supervised learning on graphs. Every filter in GCN learns linear transformations on spectral domain for every feature and combines them. More complex graph convolution methods~\cite{defferrard2016convolutional,DBLP:journals/corr/abs-1710-10903} show better performances. An obvious disadvantage of graph convolution is huge consumptions of space, which is overcome by \model. 

\subsection{GANs for Semi-supervised Learning}

Semi-supervised GANs(SGAN) were first put forward in computer vision domain~\cite{odena2016semi}. SGAN just replaces the discriminator in GANs with a classifier and becomes competitive with state-of-art semi-supervised models for image classification. Feature matching loss is first put forward to prevent generator from overtraining~\cite{salimans2016improved}. The technique is found helpful for semi-supervised learning, leaving the working principles unexplored~\cite{salimans2016improved}. Analysis on the trade-off between the classification performance of semi-supervised and the quality of generator was given in~\cite{dai2017good}. Kumar et al.~\shortcite{kumar2017semi} find a smoothing method by estimating the tangent space to the data manifold. In addition, various auxiliary architectures are combined with semi-supervised GANs to classify images more accurately~\cite{maaloe2016auxiliary,dumoulin2016adversarially,chongxuan2017triple}. All these works focus on image data and leverage CNN architectures. \smodel introduces this thought to graph data and first designs a new GAN-like game with clear and convincing working principles. 

\subsection{GAN-based Applications on Graphs}
Although we firstly introduce GANs to graph-based semi-supervised learning problem, GANs have made successes in many other machine learning problems on graphs.

One category is about graph generation. Liu et al.~\shortcite{liu2017learning} present a hierarchical architecture composed by multiple GANs to generate graphs. The model preserves topological features of training graphs. Tavakoli et al.~\shortcite{tavakolilearning} apply GANs for link formation in social networks. Generated network preserves the distribution of links with minimal risk of privacy breaches. 

Another category is about graph embedding. 
In GraphGAN~\cite{wang2017graphgan}, generator learns embeddings for nodes and discriminator solves link prediction task based on embeddings. Classification-oriented embeddings are got at the equilibrium. Dai et al.~\shortcite{dai2017adversarial} leveraged adversarial learning to regularize training of graph representations. Generator transforms embeddings from traditional algorithms into new embeddings, which not only preserve structure information but also mimic a prior distribution.

\section{Conclusion and Future Work}
\label{sec:discuss}

We propose \model, a novel approach for semi-supervised learning over graphs using GANs. We design a new competitive game between generator and classifier, in which generator generates samples in density gaps at equilibrium. Several sophisticated loss terms together guarantee the expected equilibrium.
Experiments on three benchmark datasets demonstrate the effectiveness of our approach. 

We also provide a thorough analysis of working principles behind the proposed model \model. Generated samples reduce the influence of nearby nodes in density gaps so as to make decision boundaries clear. The principles can be generalized to improve traditional algorithms based on graph Laplacian regularization with theoretical guarantees and experimental validation. 
\smodel is scalable. Experiments on DIEL dataset suggest that our model shows good performance on large graphs too.

As future work, one potential direction is to
investigate more ideal equilibrium, stabilizing the training further, accelerating training, strengthening theoretical basis of this method and extending the method to other tasks on graph data such as~\cite{Tang:08KDD}. 

\vpara{Acknowledgements.}The work is supported by the
(2015AA124102),
National Natural Science Foundation of China (61631013,61561130160),
and 
the Royal Society-Newton Advanced Fellowship Award.

\bibliographystyle{ACM-Reference-Format}
\bibliography{ref}

\end{document}